\newcommand{\citet}{\cite}
\definecolor{darkgreen}{rgb}{0,0.4,0}
\definecolor{BrickRed}{rgb}{0.65,0.08,0}
\newcommand{\LandauO}{O}
\newcommand{\Landauo}{o}
\newcommand{\E}{\mathbb{E}} 
\newcommand{\N}{\mathbb{N}} 
\newcommand{\Dc}{\mathcal{D}} 
\newcommand{\OEIS}[1]{\text{\href{https://oeis.org/#1}{{\small \tt #1}}}} 
\def\GG{\operatorname{GenGamma}}
\def\PGG{\operatorname{ProdGenGamma}}
\def\V{v_{n\ell}}
\newcommand*{\MyDef}{\, \mathcal{L}\,}
\newcommand*{\eqdefU}{\ensuremath{\mathop{\overset{\MyDef}{=}}}}
\renewcommand*{\=}{\,\mathop{\overset{\MyDef}{\resizebox{\widthof{\eqdefU}}{\heightof{=}}{=}}}\,} 
\def\inlaw{\quad \stackrel{\scriptstyle \mathcal L}{\longrightarrow} \quad } 
\author{Cyril Banderier}{Universit\'e Paris 13, LIPN, UMR CNRS 7030, \url{http://lipn.univ-paris13.fr/~banderier}}{}{https://orcid.org/0000-0003-0755-3022}{} 
\author{Philippe Marchal}{Universit\'e Paris 13, LAGA, UMR CNRS 7539, \url{https://math.univ-paris13.fr/~marchal}}{}{https://orcid.org/0000-0001-8236-5713}{}
\author{Michael Wallner}{Universit\'e de Bordeaux, LaBRI, UMR CNRS 5800, \url{http://dmg.tuwien.ac.at/mwallner}}{}{https://orcid.org/0000-0001-8581-449X}{}
\title{Periodic P\'olya urns and an application to Young tableaux}
\authorrunning{Cyril Banderier \& Philippe Marchal \& Michael Wallner}
\subjclass{
Mathematics of computing $\rightarrow$ 
Enumeration,
Mathematics of computing $\rightarrow$ 
Generating functions, 
Mathematics of computing $\rightarrow$  
Distribution functions,
Mathematics of computing $\rightarrow$  
Ordinary differential equations,
Theory of computation $\rightarrow$  Generating random combinatorial structures,
Theory of computation $\rightarrow$  Random walks and Markov chains 
}
\keywords{P\'olya urn, Young tableau, generating functions, analytic combinatorics, pumping moment, D-finite function, hypergeometric function, generalized Gamma distribution, Mittag-Leffler distribution}
\begin{document}
\maketitle

\begin{abstract}
P\'olya urns are urns where at each unit of time a ball is drawn and is replaced with some other balls 
according to its colour.
We introduce a more general model: The replacement rule depends on the colour of the drawn ball {\em and} the value of the time ($\operatorname{mod} p$). 
We discuss some intriguing properties of the differential operators associated to the generating functions encoding the evolution of these urns.
The initial  {\em partial} differential equation 
indeed leads to {\em ordinary} linear differential equations and we prove that the moment generating functions are D-finite.
For a subclass, 
we exhibit a closed form for the corresponding generating functions (giving the exact state of the urns at time~$n$).
When the time goes to infinity,
we show that these {\em periodic P\'olya urns} follow a rich variety of behaviours:
their asymptotic fluctuations are described by a family of distributions, the generalized Gamma distributions, which can also be seen as powers of Gamma distributions. 
En passant, we establish some enumerative links with other combinatorial objects, and we give an application for a new result on the asymptotics of Young tableaux:
This approach allows us to prove 
that the law of the lower right corner in a triangular Young tableau follows asymptotically a product of generalized Gamma distributions.
\end{abstract}

\newpage 
\section{Periodic P\'olya urns}
\label{sec:urn}

\emph{P\'olya urns} were introduced in a simplified version by George P\'olya and his PhD student Florian Eggenberger 
in~\cite{EggenbergerPolya23,EggenbergerPolya28,Polya30}, with applications to disease spreading and conflagrations.
They constitute a powerful model, still widely used: see e.g.{ }Rivest's recent work on auditing elections~\cite{Rivest18}, or the analysis of deanonymization in Bitcoin's peer-to-peer network~\cite{FantiPramod17}.
They are well-studied objects in combinatorial and probabilistic literature~\cite{AthreyaNey72,FlajoletGabarroPekari05,Mahmoud09},
and offer fascinatingly rich links with numerous objects like random recursive trees, $m$-ary search trees, branching random walks~(see 
e.g.~\cite{BagchiPal85,SmytheMahmoud94,Janson04,Janson05,ChauvinMaillerPouyanne15}).
In this paper we introduce a variation which offers new links with another important combinatorial structure: Young tableaux.
We solve the enumeration problem of this new model,
derive the limit law for the evolution of the urn, and give some applications. 

In the \emph{P{\'o}lya urn model}, one starts with an urn with $b_0$ black balls and $w_0$ white balls at time~$0$. 
At every discrete time step one ball is drawn uniformly at random.
After inspecting its colour it is returned to the urn. 
If the ball is black, $a$ black balls and $b$ white balls are added; if the ball is white, $c$ black balls and $d$ white balls are added (where $a,b,c,d \in \N$ are non-negative integers). 
This process can be described by the so-called \emph{replacement matrix}:
\begin{align*}
	M &=
		\begin{pmatrix}
			a & b \\
			c & d
		\end{pmatrix},
		\qquad
		a,b,c,d \in \N.
\end{align*}
\indent We call an urn and its associated replacement matrix \emph{balanced} if $K:= a+b = c+d$. In other words, in every step the same number $K$ of balls is added to the urn.
This results in a deterministic number of balls after $n$ steps: $b_0 + w_0 + Kn$ balls.



Now, we introduce a more general model which has rich combinatorial, probabilistic, and analytic properties. 

\begin{definition}
	A \emph{periodic P\'olya urn} of period~$p$ with replacement matrices~${M_1,M_2,\ldots,M_p}$
is a variant of a P\'olya urn in which the replacement matrix $M_k$ is used at steps $np+k$. 
Such a model is called \emph{balanced} if each of its replacement matrices is balanced. 
\end{definition}

In this article, we illustrate the aforementioned rich properties on the following model
(the 
results for other values of the parameters  are similar to the case we now handle in detail).

\begin{definition}
	\label{def:youngpolyaurn}
	We call a \emph{Young--P\'olya urn} the periodic P\'olya urn of period~$2$ with replacement matrices~$M_1:=
	\begin{pmatrix}
			1 & 0 \\
			0 & 1
		\end{pmatrix}$ for every odd step, and ~$M_2:=
		\begin{pmatrix}
			1 & 1 \\
			0 & 2
		\end{pmatrix}$ for every even step.
\end{definition}

Let us describe the state of the urn after $n$ steps by pairs (number of black balls, number of white balls), 
starting with $b_0=1$ black ball and $w_0=1$ white ball shown in Figure~\ref{fig:youngurnevolution}. In the first step the matrix $M_1$ is used and gives the two states 
$(2,1), \text{ and } (1,2).$
In the second step, matrix $M_2$ is used,
in the third step, matrix $M_1$ is used again, in the fourth step, matrix $M_2$, etc.
Thus, the possible states are
	$(3,2), (2,3)$, and $(1,4)$, at time $2$, and
	$(4,2), (3,3), (2,4)$, and $(1,5)$, at time $3$.

\begin{figure}[!ht]
		\begin{center}	
\begin{flushleft}
\begin{tabular}{@{}ll@{}}
\begin{tabular}{@{}l@{}}	\includegraphics[width=.63\textwidth]{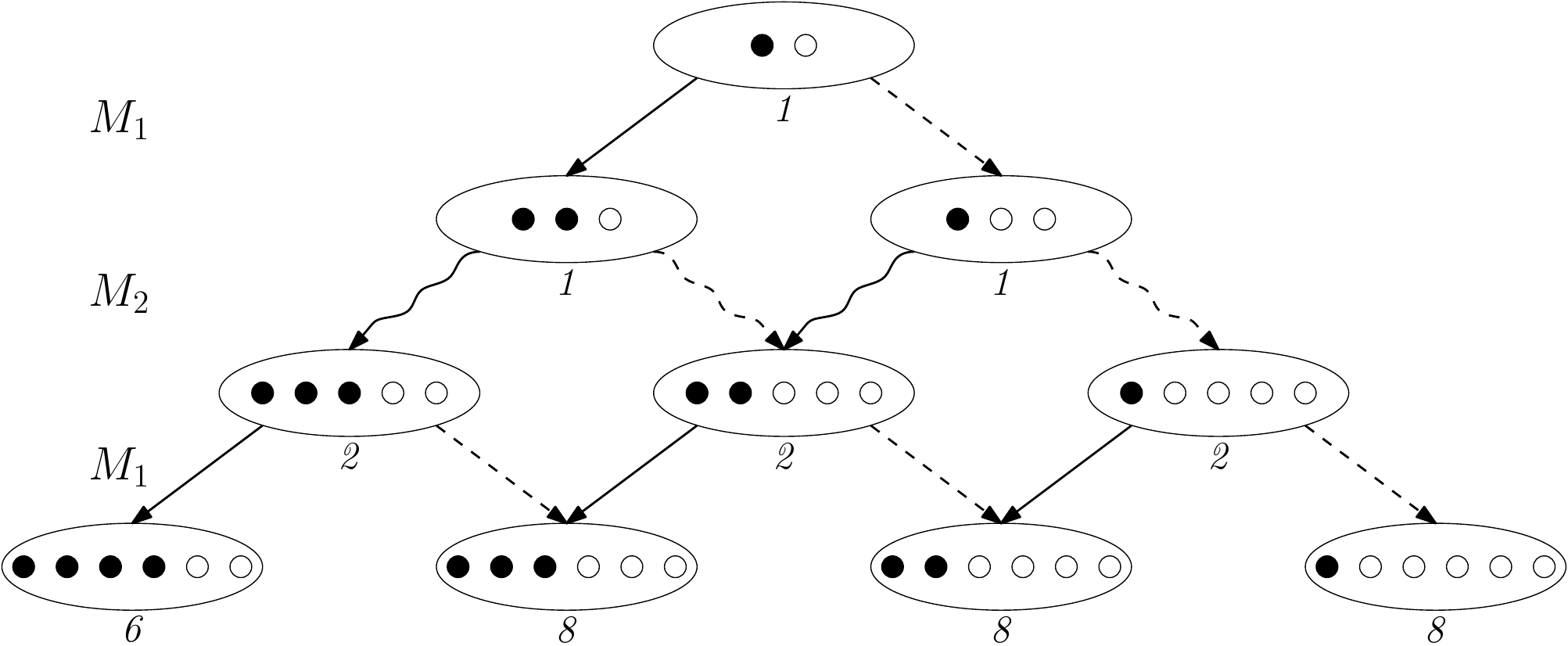} \end{tabular}
&	\hspace{-3.7mm} 		
\scalebox{.91}{\begin{tabular}{@{}l@{}}\\[0mm]
		 $H_0=xy$\\[7mm]
 $H_1=x^2y+xy^2$\\[7mm]
 $H_2=2x^3y^2+2x^2y^3+2xy^4$\\[7mm]
 $H_3=6x^4y^2+8x^3y^3+8x^2y^4+8xy^5$ \\[1cm]
			\end{tabular}} 
			\end{tabular}
			\end{flushleft}
			\caption{The evolution of a Young--P\'olya urn with one initial black and one initial white ball. Black arrows mark that a black ball was drawn, dashed arrows mark that a white ball was drawn. 
			Straight arrows indicate that the replacement matrix $M_1$ was used, curly arrows show that the replacement matrix $M_2$ was used.
			The number below each node is the number of possible transitions to reach such a state. 
			In this article we give a formula 
			for $H_n$ 
 (which encodes all the possible states of the urn at time $n$) 
			and their asymptotic behaviour.}
			\label{fig:youngurnevolution}
		\end{center}
\end{figure}

In fact, each of these states may be reached in different ways, and such a sequence of transitions is called a \emph{history}.
Each history comes with weight one. Implicitly, they induce a probability measure on the states at step $n$.
So, let $B_n$ and $W_n$ be random variables for the number of black and white balls after $n$ steps, respectively. 
As our model is balanced, $B_n + W_n$ is a deterministic process, reflecting the identity $B_n + W_n = b_0 + w_0 + n + \left\lfloor\frac{n}{2}\right\rfloor$. 
So, from now on, we concentrate our analysis on $B_n$.

For the classical model of a single balanced P{\'o}lya urn, the limit law of the random variable $B_n$ is fully known:
The possible limit laws include a rich variety of distributions. 
To name a few, let us mention 
the uniform distribution~\cite{FlajoletDumasPuyhaubert06}, 
the normal distribution~\cite{BagchiPal85}, and the Beta and Mittag-Leffler distributions~\cite{Janson04}. 
Periodic P\'olya urns (which include the classical model) lead to an even larger variety of distributions 
involving a product of \emph{generalized Gamma distributions}~\cite{Stacy62}. 

\def\xxxr{r} 

\begin{definition}\label{def:GG} 
	The generalized Gamma distribution $\GG(\alpha,\beta)$ with real parameters $\alpha,\beta>0$ is defined by the density function (having support $(0,+\infty)$\/) 
	\begin{align*}
		f(x; \alpha,\beta) &:= \frac{ \beta \, x^{\alpha-1} \exp(-x^\beta)}{\Gamma\left(\alpha/\beta\right)},
	\end{align*}
	where $\Gamma$ is the classical Gamma function $\Gamma(z):= \int_0^\infty t^{z-1}\exp(-t) \, dt$.
\end{definition}

\begin{remark}	
	Let ${\mathbf \Gamma}(\alpha)$ be the Gamma distribution\footnote{Caveat: It is traditional to use the same letter for both the $\Gamma$ function and the $\mathbf \Gamma$ distribution.
Also, some authors add a second parameter to the distribution $\mathbf \Gamma$, which is set to $1$ here.} of parameter $\alpha>0$, 
given by its density
	\begin{align*}	
		g(x; \alpha) &= \frac{x^{\alpha-1} \exp(-x)}{\Gamma(\alpha)}.
	\end{align*}
Then, one has ${\mathbf \Gamma}(\alpha)\= \GG(\alpha,1)$ and, for $\xxxr>0$, the distribution of the $r$-th power of a random variable distributed according to ${\mathbf \Gamma}(\alpha)$ is ${\mathbf \Gamma}(\alpha)^\xxxr\=\GG(\alpha/\xxxr,1/\xxxr)$.
\end{remark}

Our main results are the enumeration result from Theorem~\ref{theo:Dfinite}, 
the application to Young tableaux in Theorem~\ref{TheoremCorner}, 
and the following result (and its generalization in Theorem~\ref{theo:PGG}):
\begin{theorem}
	\label{Young_urn}
	The normalized random variable
	$
		\frac{2^{2/3}}{3} \frac{B_n} {n^{2/3}}
	$
	of the number of black balls in a Young--P\'olya urn converges in law to a generalized Gamma distribution:
	\begin{equation*}\frac{2^{2/3}}{3} \frac{B_n}{n^{2/3}} \inlaw \GG\left(1, 3 \right).\end{equation*}
\end{theorem}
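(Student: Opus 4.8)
The plan is to proceed by the \emph{method of moments}. First I would record the target moments: for $X \sim \GG(1,3)$ the substitution $u=x^{3}$ in Definition~\ref{def:GG} gives $\E[X^{s}] = \Gamma\!\left(\tfrac{s+1}{3}\right)\big/\Gamma\!\left(\tfrac13\right)$. These grow only like $(s/3)!$, so the even moments satisfy Carleman's condition $\sum_{s}\E[X^{2s}]^{-1/(2s)} = \infty$ and $\GG(1,3)$ is determined by its moments. Hence it suffices to prove that, for every integer $s\ge 1$,
\[ \E\!\left[\bigl(\tfrac{2^{2/3}}{3}\,B_n/n^{2/3}\bigr)^{s}\right] \longrightarrow \Gamma\!\left(\tfrac{s+1}{3}\right)\big/\Gamma\!\left(\tfrac13\right). \]

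Second, I would turn the urn into a univariate recursion. Because the urn is balanced, the total number of balls $t_n = 2 + n + \lfloor n/2\rfloor$ is deterministic, so every monomial of the history polynomial $H_n(x,y)$ has total degree $t_n$ and $H_n(x,y) = y^{t_n} h_n(x/y)$ with $h_n(z) := H_n(z,1)$. Drawing a ball acts on $H_n$ by the operators $x^{2}\partial_x + y^{2}\partial_y$ at odd steps and $x^{2}y\,\partial_x + y^{3}\partial_y$ at even steps; pushing either operator through this factorization, both collapse to the single recursion
\[ h_{n+1}(z) = t_n\,h_n(z) + z(z-1)\,h_n'(z), \qquad h_0(z)=z. \]
Since $p_n(u):=\E[u^{B_n}] = h_n(u)/h_n(1)$, the factorial moments are $\E\!\left[B_n^{\underline{s}}\right] = h_n^{(s)}(1)/h_n(1)$, and as $B_n\to\infty$ these have the same leading asymptotics as $\E[B_n^{s}]$. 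Differentiating the recursion $s$ times and setting $z=1$ (where $z(z-1)$ vanishes and only its first two derivatives survive) yields the triangular recursion $F_s(n+1) = (t_n+s)F_s(n) + s(s-1)F_{s-1}(n)$ for $F_s(n):=h_n^{(s)}(1)$.

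Third, I would solve this with the integrating factor $P_s(n) := \prod_{k<n}(1+s/t_k)$; splitting indices by parity ($t_{2j}=3j+2$, $t_{2j+1}=3(j+1)$) evaluates $P_s$ as a ratio of Gamma functions, giving $P_s(n) \sim (n/2)^{2s/3}\,\Gamma(2/3)\big/\bigl[\Gamma(\tfrac{s+2}{3})\Gamma(\tfrac{s+3}{3})\bigr]$, while the inhomogeneous correction forms a convergent series (its terms are $O(k^{-5/3})$). The outcome I would aim for is $\E[B_n^{\underline{s}}] \sim s!\,\Gamma(2/3)\big/\bigl[\Gamma(\tfrac{s+2}{3})\Gamma(\tfrac{s+3}{3})\bigr]\,(n/2)^{2s/3}$. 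Feeding this into the normalization, the powers of $2$ cancel and one is left with $3^{-s}\,s!\,\Gamma(2/3)\big/\bigl[\Gamma(\tfrac{s+2}{3})\Gamma(\tfrac{s+3}{3})\bigr]$; the Gauss--Legendre triplication formula $\Gamma(\tfrac{s+1}{3})\Gamma(\tfrac{s+2}{3})\Gamma(\tfrac{s+3}{3}) = 2\pi\,3^{-s-1/2}\,s!$ together with $\Gamma(\tfrac13)\Gamma(\tfrac23)=2\pi/\sqrt{3}$ collapses this exactly to $\Gamma(\tfrac{s+1}{3})/\Gamma(\tfrac13)$, the $\GG(1,3)$ moments, and the method of moments concludes.

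The main obstacle is the exact constant $s!$ in the third step. The inhomogeneous term of the triangular recursion contributes at the leading order through a sum depending on the whole trajectory $(\E[B_{k}^{\underline{s-1}}])_{k\ge 0}$, not merely on its asymptotics, so a purely asymptotic induction on $s$ determines the exponent $2s/3$ but does not close on the constant. I expect to pin this constant down either from the exact closed form for $h_n$ furnished by Theorem~\ref{theo:Dfinite}, or by solving the recursion for $F_s(n)$ in closed form; once the constant is in hand, everything else is bookkeeping with Gamma functions, with the triplication formula supplying the pleasant final simplification.
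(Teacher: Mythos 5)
Your proposal follows essentially the same route as the paper: the method of moments, with the factorial moments extracted from the urn dynamics (your one-step polynomial recursion $h_{n+1}(z)=t_n\,h_n(z)+z(z-1)\,h_n'(z)$ is just the coefficient-level form of the paper's system~\eqref{eq:HeHoPDE}, obtained by the same use of the balance condition to eliminate $y$), followed by Carleman's condition and a Fr\'echet--Shohat argument, and your claimed asymptotics $s!\,\Gamma(2/3)\,(n/2)^{2s/3}/\bigl[\Gamma(\tfrac{s+2}{3})\Gamma(\tfrac{s+3}{3})\bigr]$ for the $s$-th factorial moment agree, via the triplication formula, with the paper's Equation~\eqref{mrn}. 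The constant you honestly flag as the remaining obstacle is settled exactly as you propose --- by solving the triangular recurrence for $h_n^{(s)}(1)$ in closed form as products of Gamma ratios rather than only asymptotically --- which is precisely what the paper does for $s=0,1$ and defers to its expanded version for general $s$, so this is not a divergence from the paper's argument but the same deferred bookkeeping.
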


We give a proof of this result in Section~\ref{Sec:Moments}. Let us first mention some articles where this distribution has already appeared before:
\begin{itemize} 
	\item in Janson~\citet{Janson10a}, for the analysis of the area of the supremum process of the Brownian motion,
 \item in Pek\"oz, R\"ollin, and Ross~\cite{PekozRollinRoss16}, as distributions of processes on walks, trees, urns, and preferential attachments in graphs
(they also consider what they call a P\'olya urn with immigration, which is a special case of a periodic P\'olya urn),
	\item in Khodabin and Ahmadabadi~\citet{KhodabinAhmadabadi10} following a tradition to 
 generalize special functions by adding parameters in order to capture several probability distributions, such as e.g.~the normal, Rayleigh, and half-normal distribution, as well as the MeijerG function (see also the addendum of~\citet{Janson10a}, mentioning a dozen of other generalizations of special functions).
\end{itemize}


In the next section we translate the evolution process into the language of generating functions by encoding the dynamics of this process into partial differential equations.

\section{A functional equation for periodic P\'olya urns}

Let $h_{n,k,\ell}$ be the number of histories of a periodic P\'olya urn after~$n$ steps with~$k$ black balls and~$\ell$ white balls, with an initial state of $b_0$ black balls and $w_0$ white balls, and with replacement matrices $M_1$ for the odd steps and $M_2$ for the even steps. We define the polynomials
\begin{align*}
	H_{n}(x,y):= \sum_{k,\ell \geq 0} h_{n,k,\ell} x^k y^{\ell}.
\end{align*}
Note that these are indeed polynomials as there are just a finite number of histories after $n$ steps. 
We collect all these histories in the trivariate exponential generating function
\begin{align*}
	H(x,y,z):= \sum_{n \geq 0} H_n(x,y) \frac{z^n}{n!}.
\end{align*}
In particular, we get for the first $3$ terms of $H(x,y,z)$ the expansion (compare Figure~\ref{fig:youngurnevolution})
\begin{align*}
	H(x,y,z) = xy + \left( xy^2 + x^2y \right) z
	 + \left( 2xy^4 + 2x^2 y^3 + 2x^3 y^2 \right) \frac{z^2}{2} 
								+ \ldots
\end{align*}
Observe that the polynomials $H_n(x,y)$ are homogeneous, as we have a balanced urn model.

Now it is our goal to derive a partial differential equation describing the evolution of the periodic P\'olya urn model. 
For a comprehensive introduction to the method we refer to~\cite{FlajoletDumasPuyhaubert06}. 

In order to capture the periodic behaviour we split the generating function $H(x,y,z)$ into odd and even steps. We define
\begin{equation*}
	H_e(x,y,z):= \sum_{n \geq 0} H_{2n}(x,y) \frac{z^{2n}}{(2n)!} \text{\hspace{9mm} and \hspace{9mm}}
	H_o(x,y,z):= \sum_{n \geq 0} H_{2n+1}(x,y) \frac{z^{2n+1}}{(2n+1)!}, 
\end{equation*}
such that $H(x,y,z) = H_e(x,y,z) + H_o(x,y,z)$. 
Next, we associate to the replacement matrices $M_1$ and $M_2$ from Definition~\ref{def:youngpolyaurn} the differential operators $\Dc_1$ and $\Dc_2$, respectively. We get
\begin{equation*}
	\Dc_1:= x^2 \partial_x + y^2 \partial_y \text{\qquad and \qquad}
	\Dc_2:= x^2 y \partial_x + y^3 \partial_y,
\end{equation*}
where $\partial_x$ and $\partial_y$ are defined as the partial derivatives $\frac{\partial}{\partial x}$ and $\frac{\partial}{\partial y}$, respectively. These model the evolution of the urn. For example, in the term $x^2\partial_x$, the derivative $\partial_x$ represents drawing a black ball and the multiplication by $x^2$ returning the black ball and an additional black ball into the urn. 
The other terms have analogous interpretations. 

With these operators we are able to link odd and even steps with the following system
\begin{align}
	\label{eq:funceq1}
	\partial_z H_o(x,y,z) = \Dc_1 H_e(x,y,z) \qquad \text{and} \qquad
	\partial_z H_e(x,y,z) = \Dc_2 H_o(x,y,z).	
\end{align}

Note that the derivative $\partial_z$ models the evolution in time.
This system of partial differential equations naturally corresponds 
to recurrences at the level of coefficients $h_{n,k,\ell}$, and {\em vice versa}.
This philosophy is well explained in the {\em symbolic method} part of~\cite{FlajoletSedgewick09} and see also~\citet{FlajoletDumasPuyhaubert06}.

As a next step we want to eliminate the $y$ variable in these equations. This is possible as the number of balls in each round and the number of black and white balls are connected due to the fact that we are dealing with balanced urns. 
First, as observed previously, one has
\begin{align}
	\label{eq:ballnumber}
\text{number of balls after $n$ steps} =	b_0 + w_0 + n + \left\lfloor \frac{n}{2} \right\rfloor.
\end{align}
Therefore, for any $x^k y^{\ell} z^n$ appearing in $H(x,y,z)$ with $b_0=w_0=1$ we have
\begin{align*}
		k + \ell = 2 + \frac{3n}{2} \text{\quad (if $n$ is even) \quad \qquad and \quad \qquad}
		k + \ell = 2 + \frac{3n}{2} - \frac{1}{2} & \text{ (if $n$ is odd).}	
\end{align*}
This translates directly into
\begin{align}
	\label{eq:funceq2}
	\begin{cases}
	\begin{aligned}
	x\partial_x H_e(x,y,z) + y \partial_y H_e(x,y,z) &= 2 H_e(x,y,z) + \frac{3}{2} z\partial_z H_e(x,y,z), \\[2mm]
	x\partial_x H_o(x,y,z) + y \partial_y H_o(x,y,z) &= \frac{3}{2} H_o(x,y,z) + \frac{3}{2} z\partial_z H_o(x,y,z).
	\end{aligned}
	\end{cases}
\end{align}
Finally, combining \eqref{eq:funceq1} and \eqref{eq:funceq2}, we eliminate 
$\partial_y H_e$ and $\partial_y H_o$. 
After that it is legitimate to insert $y=1$ as there appears no differentiation with respect to $y$ anymore.
As the urns are balanced, the exponents of $y$ and $x$ in each monomial are bound (see Equation~\eqref{eq:ballnumber}),
so we are losing no information on the trivariate generating functions by setting $y=1$.
Hence, from now on we use the notation $H(x,z)$, $H_e(x,z)$, and $H_o(x,z)$ 
instead of $H(x,1,z)$, $H_e(x,1,z)$, and $H_o(x,1,z)$, respectively. 
All of this leads to our first main enumeration theorem:

\begin{theorem}[Linear differential equations and hypergeometric expressions for histories]\label{theo:Dfinite}
The generating functions describing the $2$-periodic Young--P\'olya urn at even and odd time 
 satisfy the following system of differential equations:
\begin{align}
	\label{eq:HeHoPDE}
	\begin{cases}
	\begin{aligned}
	\partial_z H_e(x,z) &= x(x-1) \partial_x H_o(x,z) + \frac{3}{2} z \partial_z H_o(x,z) + \frac{3}{2} H_o(x,z),\\[2mm]
	\partial_z H_o(x,z) &= x(x-1) \partial_x H_e(x,z) + \frac{3}{2} z \partial_z H_e(x,z) + 2 H_e(x,z).
	\end{aligned}
	\end{cases}
\end{align}
\noindent Moreover, $H_e$ and $H_o$ satisfy {\em ordinary} linear differential equations 
(they are {\em D-finite}, see e.g.~\cite[Appendix B.4]{FlajoletSedgewick09} for more on this notion), 
which in return implies that $H=H_e+H_o$ satisfies the equation $L.H(x,z) =0$, where $L$ is a differential operator of order 3 in $\partial_z$,
and one has the hypergeometric closed forms for $h_n:=[z^n] H(1,z)$:
\begin{align}\label{hn}
	h_n &=
	\begin{cases}
		 3^n \frac{\Gamma\left(\frac{n}{2} + 1\right) \Gamma\left(\frac{n}{2} + \frac{2}{3}\right)}{\Gamma(2/3)} & \text{ if $n$ is even,} \\
		3^n \frac{\Gamma\left(\frac{n}{2} + 1/2\right) \Gamma\left(\frac{n}{2} + 7/6\right)}{\Gamma(2/3)} & \text{ if $n$ is odd.}
	\end{cases}
\end{align}

\noindent Alternatively, this sequence satisfies $h(n+2)=\frac{2}{3} h(n+1)+\frac{1}{4}(9\,n^2+21\,n+12)h(n)$.
This sequence is not found in the OEIS\footnote{On-Line Encyclopedia of Integer Sequences, \href{https://oeis.org}{https://oeis.org}.}, 
we added it there, it is now \OEIS{A293653}, and it starts like this:
$1, 2, 6, 30, 180, 1440, 12960, 142560, 1710720, 23950080, 359251200, \ldots$
\end{theorem}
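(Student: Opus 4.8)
The plan is to prove the four assertions in turn, taking \eqref{eq:funceq1} and \eqref{eq:funceq2} as the starting point. The system \eqref{eq:HeHoPDE} follows by pure substitution: the first line of \eqref{eq:funceq2} gives $y\partial_y H_e = 2H_e + \frac32 z\partial_z H_e - x\partial_x H_e$, and writing $y^2\partial_y H_e = y\,(y\partial_y H_e)$ and inserting this into $\partial_z H_o = \Dc_1 H_e = x^2\partial_x H_e + y^2\partial_y H_e$ lets me set $y=1$, since no $\partial_y$ survives. The terms $x^2\partial_x H_e$ and $-x\partial_x H_e$ then combine into $x(x-1)\partial_x H_e$, which is exactly the second line of \eqref{eq:HeHoPDE}; the same manipulation applied to the second line of \eqref{eq:funceq2} together with $\partial_z H_e=\Dc_2 H_o$ produces the first line. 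This step is mechanical.

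For D-finiteness and the operator $L$, I would read \eqref{eq:HeHoPDE} as a first-order (in $\partial_z$) linear system and decouple it. Setting $P := x(x-1)\partial_x + \frac32 z\partial_z$, the two lines read $\partial_z H_e = (P+\frac32)H_o$ and $\partial_z H_o = (P+2)H_e$. Using the single commutator $[\partial_z,P]=\frac32\partial_z$ one finds $\partial_z(P+\frac32)=(P+3)\partial_z$, and hence
\[
\partial_z^2 H_e = (P+3)\,\partial_z H_o = (P+3)(P+2)\,H_e ,
\]
and symmetrically for $H_o$. This is a PDE of order two in both $\partial_x$ and $\partial_z$; eliminating the remaining $\partial_x$ between it and \eqref{eq:HeHoPDE} (differential elimination in the Ore algebra generated by $\partial_x,\partial_z$, in practice via a holonomic-systems package) yields honest ordinary differential equations in $z$ for $H_e$ and $H_o$, proving they are D-finite, together with a common annihilator $L$ of $H=H_e+H_o$ whose order in $\partial_z$ turns out to be $3$. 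I expect this elimination to be the one real obstacle: it is where the mixed $\partial_x\partial_z$ terms must be handled, and the value ``order $3$'' is genuinely a byproduct (a certificate) of the computation rather than something one guesses beforehand.

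For the closed form \eqref{hn}, I would exploit that $x(x-1)\partial_x$ vanishes at $x=1$. Writing $E(z):=H_e(1,z)$ and $O(z):=H_o(1,z)$, the system collapses to the decoupled scalar system $E'=\frac32 zO'+\frac32 O=\frac32 (zO)'$ and $O'=\frac32 zE'+2E$. Since $H_0=xy$ and $H_o$ has no constant term, $E(0)=1$ and $O(0)=0$, so integrating the first relation gives $E=\frac32 zO+1$; substituting into the second yields the first-order ODE $(1-\frac94 z^2)O'=\frac{21}4 zO+2$. Extracting coefficients and clearing the factorials of the exponential generating function, the numbers $h_n$ satisfy the two one-step recurrences
\[
h_{n+1}=\frac{3(n+1)}2\,h_n\ \ (n\text{ odd}),\qquad h_{n+1}=\frac{3n+4}2\,h_n\ \ (n\text{ even}).
\]
Composing two consecutive steps gives $h_{2k}=3k(3k-1)h_{2k-2}$ and $h_{2k+1}=3k(3k+2)h_{2k-1}$, which telescope; rewriting the products $\prod_{j\le k}(3j-1)$ and $\prod_{j\le k}(3j+2)$ through $\Gamma$ turns the two cases into precisely the even and odd branches of \eqref{hn}, and the $\Gamma$-quotient is what exhibits the hypergeometric form.

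Finally, composing the two one-step recurrences over a full period eliminates the parity distinction and produces the stated second-order recurrence with polynomial coefficients; the same relation can alternatively be read off from the specialisation of $L$ at $x=1$. Apart from the differential elimination in the second step, every part of the argument is substitution, a single integration, and telescoping.
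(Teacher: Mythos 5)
Your derivation of the system \eqref{eq:HeHoPDE} and your route to the closed form \eqref{hn} are essentially the paper's: derive the coupled first-order system, specialise at $x=1$ where $x(x-1)\partial_x$ vanishes, decouple, extract recurrences, telescope, and rewrite the products via $\Gamma$. The one genuine deviation is the decoupling step. Writing $E(z):=H_e(1,z)$ and $O(z):=H_o(1,z)$, the paper differentiates each equation once more in $z$ and substitutes, arriving at the \emph{second}-order ODEs $(9z^2-4)E''+39zE'+24E=0$ and $(9z^2-4)O''+39zO'+21O=0$; you instead integrate the first relation (noting $\frac{3}{2}zO'+\frac{3}{2}O=\frac{3}{2}(zO)'$) to get $E=\frac{3}{2}zO+1$ and a single \emph{first}-order ODE for $O$. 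Your version is lower order and arguably cleaner, and differentiating your ODE recovers the paper's equation for $O$ exactly, so the two are equivalent. For the D-finiteness in both variables and the order-$3$ operator $L$, you defer to an Ore-algebra elimination; the paper gives no more detail than you do, so this is on par, and your commutator setup $\partial_z^2 H_e=(P+3)(P+2)H_e$ is correct.

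Two small corrections. First, your final step is too quick: composing the one-step recurrences over a full period does \emph{not} remove the parity dependence --- it yields $h_{n+2}=\frac{1}{4}(9n^2+30n+24)\,h_n$ for even $n$ but $h_{n+2}=\frac{1}{4}(9n^2+30n+21)\,h_n$ for odd $n$. To obtain a parity-free second-order recurrence you must solve for the combination $h_{n+2}=\alpha\, h_{n+1}+\frac{1}{4}(9n^2+bn+c)h_n$ consistent with both parities, which forces $\alpha=\frac{3}{2}$, $b=21$, $c=12$. Second, carrying this out exposes a typo in the theorem as printed: the coefficient of $h(n+1)$ must be $\frac{3}{2}$, not $\frac{2}{3}$ (with $\frac{2}{3}$ one gets $h(2)=\frac{13}{3}\neq 6$). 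So your method, executed carefully, produces the correct recurrence rather than the literally stated one.
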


In the next section we will use Equations~\eqref{eq:HeHoPDE} to iteratively derive the moments of the distribution of black balls after $n$ steps.

\section{Moments of periodic P\'olya urns} \label{Sec:Moments}

In this section, we give a proof via the method of moments 
of Theorem~\ref{Young_urn} stated in the introduction.
Let $m_r(n)$ be the $r$-th factorial moment of the distribution of black balls after $n$ steps, i.e.
\begin{align*}
	m_r(n):= \E\left( B_n(B_n-1)\cdots(B_n-r+1) \right).
\end{align*}
Expressing them in terms of the generating function $H(x,z)$, it holds that
\begin{align*}
	m_r(n) = \frac{[z^{n}] \left. \frac{\partial^r}{\partial x^r} H(x,z) \right|_{x=1}}{[z^{n}]H(1,z)}.
\end{align*}
\noindent Splitting them into odd and even moments, we have access to these quantities via the partial differential equation~\eqref{eq:HeHoPDE}.
As a first step we compute $h_n:= [z^n]H(1,z)$, the total number of histories after $n$ steps.
We substitute $x=1$, which makes the equation independent of the derivative with respect to~$x$.
Then, the idea is to transform~\eqref{eq:HeHoPDE} into two independent differential equations for $H_e(1,z)$ and $H_o(1,z)$. This is achieved by differentiating the equations with respect to $z$ and substituting the other one to eliminate $H_e(1,z)$ or $H_o(1,z)$, respectively. This decouples the system, but increases the degree of differentiation by $1$. We get
\begin{align*}
	\left( 9z^2 - 4 \right) \partial_z^2 H_e(1,z) + 39 z \partial_z H_e(1,z) + 24 H_e(1,z) &= 0,\\
	\left( 9z^2 - 4 \right) \partial_z^2 H_o(1,z) + 39 z \partial_z H_o(1,z) + 21 H_o(1,z) &= 0.
\end{align*}
In this case it is easy to extract the underlying recurrence relations and solve them explicitly. 

\noindent This also leads to the closed forms~\eqref{hn} for $h_n$, from which it is easy to compute the asymptotic number of histories for $n \to \infty$. Interestingly, 
the first two terms in the asymptotic expansion
are the same for odd and even number of steps, only the third ones differ. We get
\begin{align*}
	h_n &= n!\frac{\sqrt{\pi}}{2^{1/6}\Gamma\left(\frac{2}{3}\right)} \left(\frac{3}{2}\right)^n n^{1/6} \left(1 + \LandauO\left(\frac{1}{n}\right)\right).
\end{align*}

As a next step we compute the mean. Therefore, we differentiate~\eqref{eq:HeHoPDE} once with respect to~$x$, substitute $x=1$, decouple the system, derive the recurrence relations of the coefficients, and solve them. Note again that the factor $(x-1)$ prevents higher derivatives from appearing and is therefore crucial for this method. After normalization by $h_n$ we get
\begin{align*}
	m_1(n) &=
	\begin{cases}
		\frac{3^{3/2} \Gamma\left(\frac{2}{3}\right)^2}{2\pi} \frac{\Gamma\left(\frac{n}{2} + \frac{4}{3}\right)}{\Gamma\left(\frac{n}{2} + \frac{2}{3}\right)} & \text{ if $n$ is even,}\\
		\frac{3^{3/2} \Gamma\left(\frac{2}{3}\right)^2}{4\pi} \frac{(n+1)\Gamma\left(\frac{n}{2} + \frac{5}{6}\right)}{\Gamma\left(\frac{n}{2} + \frac{7}{6}\right)} & \text{ if $n$ is odd.}
	\end{cases}
\end{align*}
For the asymptotic mean we discover again the same phenomenon that 
the first two terms in the asymptotic expansion are equal for odd and even $n$. 

Differentiating~\eqref{eq:HeHoPDE} to higher orders allows to derive higher moments in a mechanical way
(this however requires further details, which will be included in the expanded version of this article).
In general we get the closed form for the $r$-th factorial moment
\begin{align} \label{mrn}
	m_r(n) &= \frac{3^r}{2^{2r/3}} \frac{\Gamma\left(\frac{r}{3} +\frac{1}{3} \right)}{\Gamma\left(\frac{1}{3}\right)} n^{2r/3} \left(1 + \LandauO\left(\frac{1}{n}\right)\right).
\end{align}

Therefore we see that the moments $\E\left( {B_n^*}^r\right)$ of the rescaled random variable 
$B_n^*:= \frac{2^{2/3}}{3}\frac{B_n}{n^{2/3}}$
converge for $n$ to infinity to the limit
\begin{align} \label{m_r}
	m_r:= \frac{\Gamma\left(\frac{r}{3} +\frac{1}{3} \right)}{\Gamma\left(\frac{1}{3}\right)}.
\end{align}
Note that one has $m_r^{-1/(2r)} = \left(\frac{3e}{r}\right)^{1/6}(1+\Landauo(1))$ for large $r$, so the following sum diverges:
\begin{align}\label{Carleman}
	\sum_{r > 0} m_{r}^{-1/(2r)} = +\infty\,.
\end{align}
Therefore, a result by Carleman (see~\cite[pp.~189-220]{Carleman23} or~\cite[p.~330]{Wall48})\footnote{Note that there is no typo in Formula~\ref{Carleman}:
 if the support of the density is $[0,+\infty[$ the moments in the sum have index~$r$ and exponent $-1/(2r)$, 
while they have index $2r$ and exponent $-1/(2r)$ if the support is  $]-\infty,+\infty[$.}
implies that there exists a unique distribution (let us call it $\mathcal D$) with such moments~$m_r$.

Furthermore, by the asymptotic result from Equation~\eqref{mrn} 
there exist an $n_0>0$ and constants $a_r$ and $b_r$ independent of $n$ such that
$	a_r < m_r(n) < b_r$,
for all $n \geq n_0$. Thus, by the limit theorem of Fr\'echet and Shohat~\cite{FrechetShohat31}\footnote{As a funny coincidence, Fr\'echet and Shohat mention in~\cite{FrechetShohat31} that 
the generalized Gamma distribution with parameter $p\geq 1/2$ is uniquely characterized by its moments.}
 there exists a limit distribution (which therefore has to be~$\mathcal D$)
to which a {\em subsequence} of our rescaled random variables $B_n^*$ converge to.
And as we know via Carleman's criterion above that $\mathcal D$ is uniquely determined by its moments,
it is in fact the {\em full sequence} of $B_n^*$ which converges to $\mathcal D$.

Now it is easy to check that if $X \sim \GG(d,p)$ is a generalized Gamma distributed random variable (as defined in Definition~\ref{def:GG}),
then it is a distribution determined by its moments, which are given by
$\E(X^r) = {\Gamma\left(\frac{d+r}{p}\right)}/{\Gamma\left(\frac{d}{p}\right)}.$

In conclusion, the structure of $m_r$ in Formula~\eqref{m_r} implies that the normalized random variable~$B_n^*$ of the number of black balls in a
Young--P\'olya urn converges to 
$\GG\left(1, 3 \right).$
This completes the proof of Theorem~\ref{Young_urn}. \qed

 \bigskip
The same approach allows us to study the distribution of black balls for the 
urn with replacement matrices $M_1= M_2 = \dots = M_{p-1} = 
	\begin{pmatrix}
			1 & 0 \\
			0 & 1
		\end{pmatrix}$ 
		and 
		$M_p=
		\begin{pmatrix}
			1 & \ell \\
			0 & 1+\ell
		\end{pmatrix}$. 
We call this model the \emph{Young--P\'olya urn of period $p$ and parameter $\ell$}.
		
\begin{theorem} \label{theo:PGG}
The renormalized distribution of black balls in the Young--P\'olya urn of period~$p$ and parameter~$\ell$
is asymptotically a distribution, which we call $\PGG(p,\ell,b_0,w_0)$, defined 
as the following product of independent distributions:
\begin{equation}\label{ProdGenGamma}
\frac{p^\delta}{p+\ell} \frac{B_n }{n^\delta}\inlaw \operatorname{Beta}(b_0,w_0) \prod_{i=0}^{\ell-1} \GG(b_0+w_0+p+i, p+\ell)
\end{equation}
with $\delta=p/(p+\ell)$, and where $\operatorname{Beta}(b_0,w_0)$ is as usual the law with support $[0,1]$ and density $\frac{\Gamma(b_0+w_0)}{\Gamma(b_0)\Gamma(w_0)} x^{b_0-1} (1-x)^{w_0-1}$.
\end{theorem}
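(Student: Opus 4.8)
The plan is to run the same method-of-moments argument that proves Theorem~\ref{Young_urn}, lifted to a period-$p$ system whose limit is a product law. First I would generalise the functional equation of Section~2. For each residue $j\in\{0,1,\dots,p-1\}$ I introduce the generating function $H^{(j)}$ collecting the histories whose length is $\equiv j$ modulo $p$, say $H^{(j)}(x,y,z):=\sum_{n\equiv j\,(p)}H_n(x,y)\frac{z^n}{n!}$, and to each replacement matrix I attach its operator exactly as in Section~2: the identity-type matrices $M_1=\dots=M_{p-1}$ all give $\Dc_1=x^2\partial_x+y^2\partial_y$, while $M_p$ gives $\Dc_p=x^2y^{\ell}\partial_x+y^{\ell+2}\partial_y$. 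The dynamics then form the cyclic system $\partial_z H^{(j)}=\Dc_{k(j)}H^{(j-1)}$ (indices mod $p$) generalising~\eqref{eq:funceq1}, where $k(j)$ is the matrix index used at steps $\equiv j$.

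Next I would eliminate $y$. Since one full period adds exactly $p+\ell$ balls, balancedness gives, for every $j$, the homogeneity relation $x\partial_x H^{(j)}+y\partial_y H^{(j)}=\bigl(b_0+w_0-\tfrac{\ell j}{p}\bigr)H^{(j)}+\tfrac{p+\ell}{p}\,z\partial_z H^{(j)}$, the exact analogue of~\eqref{eq:funceq2}. Solving it for $\partial_y H^{(j)}$ and substituting into the cyclic system lets me set $y=1$ without loss of information, producing a closed system in $(x,z)$ playing the role of~\eqref{eq:HeHoPDE}. Crucially, in every resulting equation the $x$-derivative enters only through $x(x-1)\partial_x$: the factor $(x-1)$ is manufactured by the elimination, exactly as noted after~\eqref{eq:HeHoPDE}. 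Setting $x=1$ decouples the system into ordinary linear ODEs for the $H^{(j)}(1,z)$, from which I read off a closed form for $h_n=[z^n]H(1,z)$; and differentiating $r$ times in $x$ before setting $x=1$ yields a recurrence for the $r$-th factorial moment that, thanks to the $(x-1)$ factor, feeds only on moments of order $\le r$. This triangular hierarchy is solvable step by step and gives $m_r(n)$ as a ratio of Gamma functions in $n$, with asymptotics $m_r(n)\sim C_r\,n^{r\delta}$, $\delta=p/(p+\ell)$. Rescaling by $p^{\delta}/(p+\ell)$ then sends the normalised moments to explicit constants $m_r$.

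The heart of the matter is to identify $m_r$ with the moment $\E[Y^r]\prod_{i=0}^{\ell-1}\E[Z_i^r]$ of the announced product, where $\E[Y^r]=\frac{\Gamma(b_0+r)\Gamma(b_0+w_0)}{\Gamma(b_0)\Gamma(b_0+w_0+r)}$ is the $\operatorname{Beta}(b_0,w_0)$ moment and $\E[Z_i^r]=\Gamma((b_0+w_0+p+i+r)/(p+\ell))/\Gamma((b_0+w_0+p+i)/(p+\ell))$ is the $\GG(b_0+w_0+p+i,\,p+\ell)$ moment (the generalized-Gamma moment formula was recorded in the proof of Theorem~\ref{Young_urn}). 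The Gamma-ratio coming out of the urn splits into this product through the Gauss multiplication formula $\Gamma(mz)=(2\pi)^{(1-m)/2}m^{mz-1/2}\prod_{k=0}^{m-1}\Gamma(z+k/m)$ taken with $m=p+\ell$: the $\ell$ consecutive shifts $i=0,\dots,\ell-1$ are precisely those singled out by the splitting, while the remaining $p$ shifts, combined with the initial-condition Gamma factors, collapse into the single Beta moment. Finally, as in the proof of Theorem~\ref{Young_urn}, the moments $m_r$ grow like $r^{\,r\ell/(p+\ell)}$ up to subexponential factors, so $\sum_r m_r^{-1/(2r)}$ diverges; Carleman's criterion then shows the product law is moment-determined, and the Fr\'echet--Shohat theorem upgrades moment convergence to the convergence in law~\eqref{ProdGenGamma}.

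The main obstacle is this identification step. Solving the factorial-moment recurrences explicitly for general $(p,\ell,b_0,w_0)$ is mechanical but lengthy bookkeeping; what is genuinely delicate is recognising the resulting Gamma-ratio as a product of exactly $\ell$ generalized-Gamma moments --- with the arithmetic-progression parameters $b_0+w_0+p+i$ and common scale $p+\ell$ --- times one Beta moment. This factorisation is invisible at the level of the PDE and materialises only through the multiplication-formula splitting; matching the $\ell$ shifts and isolating the Beta factor (already a nontrivial distributional identity in the base case, where $\operatorname{Beta}(1,1)\cdot\GG(4,3)\=\GG(1,3)$ recovers Theorem~\ref{Young_urn}) is the real content of the theorem beyond the single generalized Gamma obtained there.
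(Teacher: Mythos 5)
Your proposal is correct and follows essentially the same route as the paper: the paper's own proof is a two-line sketch that states the $r$-th (factorial) moment asymptotics $\E(B_n^r)\sim\frac{(p+\ell)^r}{p^{\delta r}}\,\frac{\Gamma(b_0+r)\Gamma(b_0+w_0)}{\Gamma(b_0)\Gamma(b_0+w_0+r)}\prod_{i=0}^{\ell-1}\frac{\Gamma\left(\frac{b_0+w_0+p+r+i}{p+\ell}\right)}{\Gamma\left(\frac{b_0+w_0+p+i}{p+\ell}\right)}\,n^{\delta r}$ and invokes moment-determinacy of the product law, exactly the method-of-moments argument you describe. You in fact supply more detail than the paper does (the period-$p$ PDE system, the Gauss multiplication step identifying the Beta and generalized Gamma factors, and the Carleman/Fr\'echet--Shohat conclusion), and these details check out against the base case and the stated moment formula.
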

\begin{proof}[Sketch]
This follows from the following $r$-th (factorial) moment computation:
\begin{align*}
	\E \left( B_n^r \right) &= \frac{(p+\ell)^r}{p^{\delta r}} \frac{ \Gamma(b_0+r) \Gamma(b_0+w_0)}{\Gamma(b_0) \Gamma(b_0+w_0+r)} \prod_{i=0}^{\ell-1} \frac{\Gamma\left(\frac{b_0+w_0+p+r+i}{p+\ell}\right)}{\Gamma\left(\frac{b_0+w_0+p+i}{p+\ell}\right)} \, n^{\delta r} \left(1+\LandauO\left(\frac{1}{n}\right)\right), 
\end{align*}
which in turn characterizes the $\PGG$ distribution. 
Indeed, if for some independent random variables $X, Y, Z$, one has 
$\E(X^r) = \E(Y^r) \E(Z^r)$ (and if $Y$ and $Z$ are determined by their moments),
then $X \= Y Z$.
\end{proof}

This is consistent with our results on the Young--P\'olya urn introduced in Section~\ref{sec:urn}.
Indeed, there one has $w_0=b_0=1, p=2, \ell=1$, and therefore the renormalized distribution of black balls $\frac{p^\delta}{p+\ell} B_n/n^\delta$ 
is asymptotically $\operatorname{Unif}(0,1) \cdot \GG(4,3) = \GG\left(1, 3 \right)$. 

We will now see what are the implications of this result on an apparently unrelated topic: Young tableaux.

\section{Urns, trees, and Young tableaux}\label{sec:Tableaux}

As predicted by Anatoly Vershik in~\cite{Vershik01}, 
the 21st century should see a lot of challenges and advances on the links of probability theory 
with (algebraic) combinatorics. 
A key role is played here by Young tableaux\footnote{A Young tableau of size~$n$ is an array with columns of (weakly) decreasing height,
in which each cell is labelled, 
and where the labels run from 1 to $n$ and are strictly increasing along rows from left to right and columns from bottom to top, see Figure~\ref{fig:tabtreeurn}. 
We refer to~\cite{Macdonald15} for a thorough discussion on these objects.}, because of their ubiquity in representation theory.
Many results on their asymptotic shape have been collected, 
but very few results are known on their asymptotic content when the shape is fixed
(see e.g.~the works by Pittel and Romik, Angel {\em et al.}, Marchal~\cite{PittelRomik07,AngelHolroydRomikVirag07,Romik15,Marchal15}, who have studied 
the distribution of the values of the cells in random rectangular or staircase Young tableaux, 
while the case of Young tableaux with a more general shape seems to be very intricate).
It is therefore pleasant that our work on periodic P\'olya urns allows us to get advances on the case of a triangular shape, with any slope. 

For any fixed integers $n,\ell, p\geq 1$, we introduce the quantity $N:= p\ell n(n+1)/2$. 
We define a {\em triangular Young tableau} of slope $-\ell/p$ and of size $N$ 
as a classical Young tableau with $N$ cells with length $n\ell$ and height $np$ such that the first $p$ rows (from the bottom) have length $n\ell$, the
next $p$ lines have length $(n-1)\ell$ and so on (see Figure~\ref{fig:tabtreeurn}).
We now study what is the typical value of its lower right corner (with the French convention for drawing Young tableaux, see~\cite{Macdonald15}
but take however care that on page~2 therein, Macdonald advises readers preferring the French convention to ``read this book upside down in a mirror''!).

It could be expected (e.g.~via the Greene--Nijenhuis--Wilf hook walk algorithm for generating Young tableaux, see~\cite{GreeneNijenhuisWilf84}) 
that the entries near the hypotenuse should be $N-o(N)$. 
Can we expect a more precise description of these $o(N)$ fluctuations? Our result on periodic urns enables us to exhibit the right critical exponent, and the limit law in the corner:

\begin{theorem}\label{TheoremCorner}
Choose a uniform random triangular Young tableau $\cal Y$ of slope $-\ell/p$ and 
size~$N = p\ell n(n+1)/2$ and put $\delta=p/(p+\ell)$. 
Let $X_n$ be the entry of the lower right.
Then $(N-X_n)/n^{1+\delta}$ converges in law to the same limiting
distribution as the number of black balls in the periodic Young--P\'olya urn 
with initial conditions $w_0=\ell$, $b_0=p$ and
with replacement matrices $M_1= \dots = M_{p-1} = 
	\begin{pmatrix}
			1 & 0 \\
			0 & 1
		\end{pmatrix}$ 
		and 
		$M_p=
		\begin{pmatrix}
			1 & \ell \\
			0 & 1+\ell
		\end{pmatrix}$,
i.e.~we have the convergence in law, as $n$ goes to infinity:
				\begin{equation*}\frac{p^\delta}{p+\ell} \frac{N-X_n}{ n^{1+\delta} } \stackrel{\mathcal L}{\longrightarrow} \PGG(p,\ell,b_0,w_0).\end{equation*}
(Recall that $\PGG$ is defined by Formula~\ref{ProdGenGamma}.)
\end{theorem}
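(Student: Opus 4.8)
The plan is to transport the question to the periodic urn of Theorem~\ref{theo:PGG} via the reverse (decreasing) cell-removal process on $\cal Y$, and then to conclude by the method of moments, exactly as in the proof of Theorem~\ref{Young_urn}. First I would reformulate the target quantity. Writing $c_0=(1,n\ell)$ for the lower right cell, the value $X_n$ is the number of cells filled no later than $c_0$, so $N-X_n$ equals the size of the skew shape of cells carrying a label larger than $X_n$, i.e.\ the cells deleted strictly before $c_0$ when one peels $\cal Y$ by removing the cells $N,N-1,\dots$ in decreasing order. This peeling is the classical Markov chain on shapes whose one-step transition probabilities are ratios of numbers of standard Young tableaux, computable through the hook length formula; equivalently it can be generated by the Greene--Nijenhuis--Wilf hook walk~\cite{GreeneNijenhuisWilf84}. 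The key geometric observation is that the triangular shape has exactly $n$ removable (outer) corners, one at the top right of each of its $n$ staircase steps, and that $c_0$ sits at the bottom of a $p\times\ell$ rectangular overhang; in particular $c_0$ becomes removable only after the $p-1$ cells above it in the last column have been deleted, which already contributes $p-1$ deterministically to $N-X_n$.

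The heart of the argument is to build a coupling between this peeling, localised near the bottom right overhang, and the periodic Young--P\'olya urn. The period $p$ will reflect the $p$ rows separating two consecutive outer corners, the width $\ell$ of the steps will reflect the $\ell$ extra columns gained at each step, and the overhang dimensions $p\times\ell$ will fix the initial composition $b_0=p$, $w_0=\ell$. Concretely I would track a two-colour statistic of the configuration around the overhang: ``black'' counting the strips still to be removed before $c_0$ along the $p$ directions coming from the height of a step, ``white'' those coming from its width. Drawing a ball corresponds to deleting the next outer corner feeding this region; the $p-1$ identity matrices $M_1=\dots=M_{p-1}$ encode the row deletions inside one staircase step, while the distinguished matrix $M_p$ encodes the passage to the next step, where the width-$\ell$ jump injects $\ell$ new white units. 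One then reads off the scaling: each ball corresponds to a full strip of $\Theta(n)$ genuine cells of the skew shape, and since the number of balls is of order $n^{\delta}$ after the $\asymp n$ relevant draws, the cell count $N-X_n$ is of order $n\cdot n^{\delta}=n^{1+\delta}$, which is $o(N)$, as expected.

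Finally I would conclude by moments. From the coupling one should obtain that $\E\big((N-X_n)^r\big)$ is asymptotic to $\big(n^{1+\delta}\big)^{r}$ times the limiting $r$-th moment of $\tfrac{p^\delta}{p+\ell}B_n/n^{\delta}$ computed in the proof of Theorem~\ref{theo:PGG}, namely a $\operatorname{Beta}(b_0,w_0)$ factor multiplied by $\ell$ generalized Gamma factors. Since this product of independent laws is determined by its moments (Carleman's criterion~\cite{Carleman23} handles the $\GG$ factors, the $\operatorname{Beta}$ factor being bounded) and the moments are uniformly bounded in $n$, the limit theorem of Fr\'echet and Shohat~\cite{FrechetShohat31} upgrades moment convergence to convergence in law, giving $\tfrac{p^\delta}{p+\ell}(N-X_n)/n^{1+\delta}\to\PGG(p,\ell,b_0,w_0)$.

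The main obstacle is the middle step: proving that the hook length ratios of the peeling chain for the triangular shape converge, with a controlled $1+O(1/n)$ error, to the exact drawing probabilities of the periodic urn, and that the contribution of cells far from $c_0$ (incomparable to $c_0$ but deep inside $\cal Y$) is asymptotically negligible, so that only the local bottom right structure survives in the limit. Making this local-limit and exchange-of-limits argument rigorous, uniformly across all moments $r$, is the crux; the remaining work is the bookkeeping of hook lengths (where the triangular shape is essential, its hooks factoring nicely) and the now-standard passage from moments to law.
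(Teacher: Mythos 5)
Your opening reduction ($N-X_n$ equals the number of cells peeled off before the corner $c_0$, plus the deterministic $p-1$ cells above it), your dimensional analysis ($\asymp n^{\delta}$ balls, each worth a strip of $\Theta(n)$ cells, hence the exponent $1+\delta$), and your closing moments-to-law step are all consistent with the paper. But the middle of your argument --- the part you yourself identify as the crux --- is both different from the paper's route and genuinely incomplete. The paper does \emph{not} take an asymptotic local limit of the hook-length-ratio peeling chain. It proves an \emph{exact} distributional identity: $X_n \= E_{\cal T}(\V)$, where $E_{\cal T}$ is a uniform linear extension of an explicit tree $\cal T$ whose leftmost branch reproduces the hook lengths of the first row of $\cal Y$; the number of elements falling after $\V$ in a uniform linear extension of the pruned tree $\cal S$ is then \emph{exactly} (not up to $1+O(1/n)$ errors) the black-ball count of the periodic urn with matrices $M_1,\dots,M_p$ and initial state $(b_0,w_0)=(p,\ell)$, because inserting the $p$ leaf children of each spine vertex $v_{k\ell}$ one at a time lands after $\V$ with probability exactly $B/(B+W)$, and the $\ell$ spine vertices consumed per period are the $\ell$ deterministic white balls of $M_p$. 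Theorem~\ref{theo:PGG} is then invoked as a black box; no moment estimates for the peeling chain are ever needed.

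The concrete gap in your plan is the localization claim. It is not true that ``the contribution of cells far from $c_0$ is asymptotically negligible, so that only the local bottom right structure survives'': in the exact urn representation just described, the periods $j=1,\dots,n-1$ index the successive groups of rows of $\cal Y$ from bottom to top, black balls continue to be added at expected rate $\asymp j^{\delta-1}$ throughout \emph{all} periods, and a computable positive fraction of the limit (of both the ball count and the cell count it represents, e.g.\ about $1-2^{-\delta}$ of the balls for the top half) comes from rows at vertical distance $\Theta(n)$ from the corner. So a local analysis of the peeling chain near the $p\times\ell$ overhang cannot by itself identify $N-X_n$; at best the \emph{randomness} of the limit is generated early (the urn martingale is essentially determined by its first steps), but converting that into a proof requires a concentration statement for the far-away contribution conditionally on the local one, which you do not supply. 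Combined with the fact that the coupling itself is never constructed (you do not say what a single ``draw'' is in terms of the chain, nor verify any transition probability), and that extracting $\E\bigl((N-X_n)^r\bigr)$ from the chain would require controlling $r$-point correlations of removal times, the core of the theorem remains unproved. If you want a workable path, replace the asymptotic coupling by the exact one: match the first-row hook lengths of $\cal Y$ with a tree and read the periodic urn off the sequential-insertion construction of its uniform linear extensions.
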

\noindent {\bf Remark:} The simplest case ($\ell=1$, $p=2$) relates to the Young--P\'olya urn model which we analysed in the previous sections.

\begin{proof}[Sketch of proof.]
We first establish a link between Young tableaux 
and linear extensions of trees. 
Then we will be able to conclude via a link between these trees and 
periodic P\'olya urns. 
Let us start with Figure~\ref{fig:tabtreeurn}, which describes the main characters of this proof.


\begin{figure}[ht!]
		\begin{center}	
			\includegraphics[width=.70\textwidth]{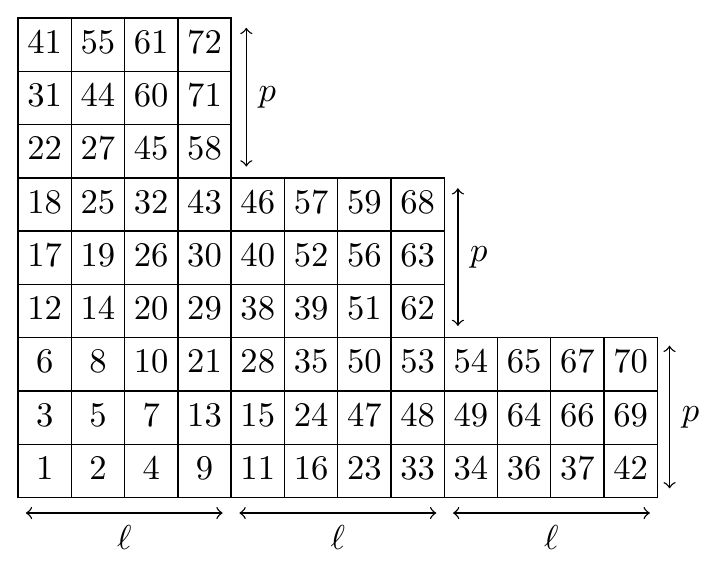} \\ \bigskip \smallskip 
			\includegraphics[width=.95\textwidth]{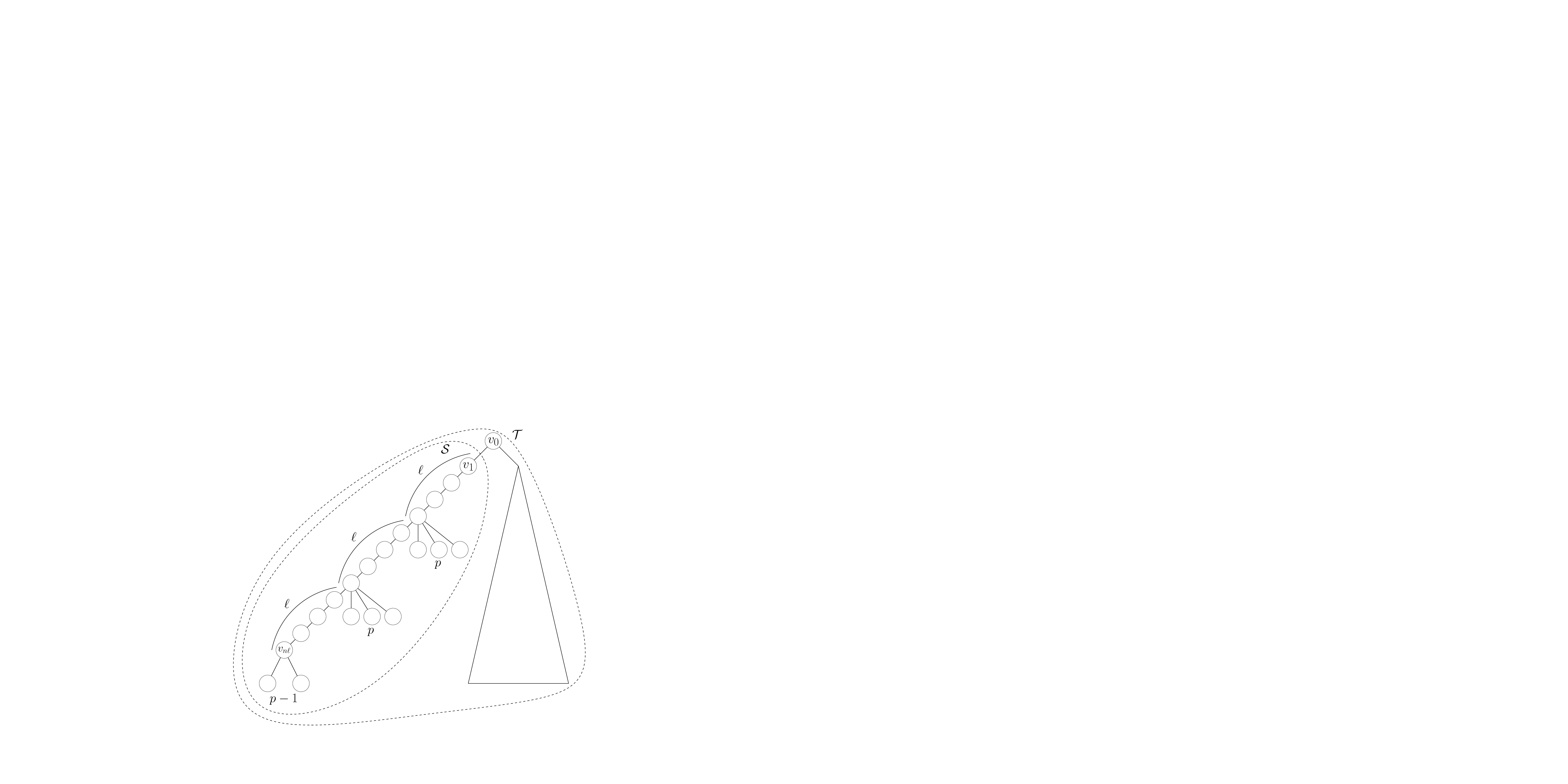}
			\caption{In this section, we see that there is a relation between Young tableaux with a given periodic shape, some trees, and the
			periodic Young--P\'olya urns. The lower right corner of these Young tableaux is thus following the same generalized Gamma distribution we proved for urns.} \label{fig:tabtreeurn}
		\end{center}
\end{figure}

The bottom part of Figure~\ref{fig:tabtreeurn} presents two trees (the ``big'' tree\, $\cal T$, which contains the ``small'' tree $\cal S$).
More precisely, we define the rooted planar tree $\cal S$ as follows:
\begin{itemize}
\item The left-most branch of $\cal S$ has $n\ell+1$ vertices, which we call
$v_1, v_2, \ldots, v_{n\ell+1}$, where $v_1$ is the root and $v_{n\ell+1}$ is the left-most leaf of the tree.
\item For $2\leq k\leq n-1$, the vertex $v_{k\ell}$ has $p+1$ children.
\item The vertex $v_{n\ell}$ has $p-1$ children.
\item All other vertices $v_j$ (for $j<n\ell, j\neq k\ell$) have exactly one child.
\end{itemize}

Now, define $\cal T$ as the ``big'' tree obtained from the ``small'' tree $\cal S$ by adding a vertex $v_0$ as the father of $v_1$ and
adding $N+1-n(p+\ell)$ children to $v_0$ (see Figure~\ref{fig:tabtreeurn}). Remark that the number of vertices of $\cal T$ is equal to 1 + the number of cells of~$\cal Y$. 
Moreover, the hook length of each cell in the first row (from the bottom) of $\cal Y$ is equal to the hook length of the corresponding vertex in the left-most branch of $\cal S$.

Let us now introduce a linear extension $E_{\cal T}$ of~$\cal T$, i.e.~a bijection
from the set of vertices of~$\cal T$ to $\{0, 1, \ldots, N\}$ such that $E_{\cal T}(u)<E_{\cal T}(u')$ whenever $u$ is an ancestor of $u'$.
A key result, which will be proved in the expanded version of this abstract, 
 is the following: if $E_{\cal T}$ is a uniformly random linear extension of $\cal T$, then $X_n$
(the entry of the lower right corner in a uniformly random Young tableau with shape $\cal Y$) has the same law as $E_{\cal T}(\V)$: 
\begin{equation}\label{eq1} X_n \= E_{\cal T}(\V). \end{equation}

What is more, recall that $\cal T$ was obtained from $\cal S$ by adding a root and some children to this root. Therefore, one can obtain
a linear extension of the ``big'' tree\, $\cal T$ from a linear extension of the ``small'' tree $\cal S$ by a simple insertion procedure.
This allows us to construct 
a uniformly random linear extension $E_{\cal T}$ of $\cal T$ and a uniformly random linear extension $E_{\cal S}$ of $\cal S$ such that
\begin{equation*}\left| \frac{2(p+\ell)}{n\ell p}(N-E_{\cal T}(\V))- (n\ell+p-E_{\cal S}(\V)) \right| \rightarrow 0 \text{\qquad (in probability).}\end{equation*}

So, to summarize, we have now 
\begin{equation} \label{eq2} E_{\cal T}(\V) \= E_{\cal S}(\V) + \text{deterministic quantity + smaller order error terms.} \end{equation}

The last step (which we just state here, see our long version for its full proof) is that 
\begin{equation} \label{eq3} E_{\cal S}(\V) \= \text{distribution of periodic P\'olya urn} + \text{deterministic quantity.} \end{equation}
Indeed, more precisely $N-E_{\cal S}(\V)$ has the same law as the number of black balls in a periodic urn
after $(n-1)p$ steps (an urn with period $p$, with adding parameter $\ell$, and with initial conditions 
 $w_0=\ell$ and $b_0=p$). Thus, our results on periodic urns from Section~\ref{Sec:Moments} 
and the conjunction of Equations~\eqref{eq1}, \eqref{eq2}, and \eqref{eq3} gives the convergence in law for $X_n$ which we wanted to prove.
\end{proof}

\section{Conclusion and further work}

In this article, we introduced P\'olya urns with periodic replacements, 
and showed that they can\linebreak be exactly solved with generating function techniques,
and that the initial partial differential equation encoding their dynamics leads to linear (D-finite) moment generating functions,
which we identify as a product of generalized Gamma distributions.
Note that~\cite{Mailler14,KubaSulzbach17} involve the asymptotics of a related process (by grouping $p$ units of time at once of our periodic P\'olya urns). This related process 
is therefore ``smoothing'' the irregularities created by our periodic model, and allows us to connect with the usual famous key quantities for urns, such as the
quotient of eigenvalues of the substitution matrix, etc. Our approach has the advantage to describe each unit of time (and not just what happens after ``averaging'' $p$ units of time at once), 
giving more asymptotic terms, and also exact enumeration. 

In the full version of this work we will consider arbitrary periodic balanced urn models,
and their relationship with Young tableaux.
It remains a challenge to understand the asymptotic landscape of Young tableaux, 
even if it could be globally expected that they behave like a Gaussian free field, 
like for many other random surfaces~\cite{Kenyon01}.
As a first step, understanding the  fluctuations and the universality of the critical exponent at the 
corner could help to get a more global picture. 
Note that our results on the lower right corner directly imply 
similar results on the upper right corner: just use our formulae by exchanging $\ell$ and $p$, 
i.e.~for a slope corresponding to the complementary angle to $90^{\rm o}$.
Thus the critical exponent for the upper right corner is $2-\delta$.
In fact, it is a nice surprise that there is even more structure:
there is a {\em duality} between the limit laws $X$ and $X'$ of these two corners and
we get the factorization as independent random variables (up to renormalization and slight modifications
of the boundary conditions) $XX'\={\mathbf \Gamma}(b_0)$. 
Similar factorizations of the exponential law, which is a particular case of the Gamma distribution, 
have appeared recently in relation with functionals of L\'evy processes, following~\cite{BertoinYor01}. 

\bigskip

{\bf Acknowledgements}: Let us thank C\'ecile Mailler, Henning Sulzbach and Markus Kuba for kind exchanges on their 
work~\cite{Mailler14,KubaSulzbach17} and on related questions.
We also thank our referees for their careful reading.

\bigskip

\bibliographystyle{plainurl} 

\end{document}